\documentclass[10pt, conference, letterpaper]{IEEEtran}
\IEEEoverridecommandlockouts
\usepackage{cite}
\usepackage{amsmath,amssymb,amsfonts}
\usepackage{algorithmic}
\usepackage{graphicx}
\usepackage{textcomp}
\usepackage{xcolor}
\definecolor{darkgreen}{rgb}{0.0, 0.39, 0.0}
\usepackage{enumitem}
\usepackage{url}
\usepackage{pifont}
\usepackage[linesnumbered,ruled,vlined]{algorithm2e}
\usepackage{booktabs}
\usepackage{multirow}
\usepackage{colortbl}
\usepackage{subfig}
\usepackage{hyperref}
\usepackage{cleveref}

\usepackage{mathtools}
\usepackage{amsthm}

\newtheorem{theorem}{Theorem}

\usepackage{subcaption}
\hypersetup{hidelinks}

\def\BibTeX{{\rm B\kern-.05em{\sc i\kern-.025em b}\kern-.08em
    T\kern-.1667em\lower.7ex\hbox{E}\kern-.125emX}}
\begin{document}
\bstctlcite{BSTcontrol}
\title{Inverting the Hidden: Unveiling Multimodal Privacy Leakage in Collaborative LVLM Inference
% {\footnotesize \textsuperscript{*}Note: Sub-titles are not captured in Xplore and
% should not be used}
% \thanks{Identify applicable funding agency here. If none, delete this.}
}

\author{
{\large\rmfamily
Shuaifan Jin$^{\dagger,\wr}$ \quad
Zhibo Wang$^{\dagger,\wr,\ast}$ \quad
Qiyuan Wang$^{\ddagger}$ \quad
Yiting Han$^{\dagger,\wr}$
}\\
{\large\rmfamily
Yajie Zhou$^{\dagger,\wr}$ \quad
Yuanfan Zhang$^{\dagger,\wr}$ \quad
Jiahui Hu$^{\S}$ \quad
Xiaoyi Pang$^{\natural}$
}\\[1mm]
{\normalsize
$^{\dagger}$The State Key Laboratory of Blockchain and Data Security, Zhejiang University, China
}\\
{\normalsize
$^{\wr}$College of Computer Science and Technology, Zhejiang University, China
}\\
{\normalsize
$^{\ddagger}$Department of Statistics, Texas A\&M University, USA
\quad
$^{\S}$Nanchang University, China
}\\
{\normalsize
$^{\natural}$Hong Kong University of Science and Technology, China
}\\[0.5mm]
{\small
\{shuaifanjin,zhibowang\}@zju.edu.cn
\quad
wqy213@tamu.edu
}\\
{\small
\{yitinghan,yajiezhou,zhangyuanfan\}@zju.edu.cn
\quad
jiahuihu@ncu.edu.cn
\quad
xypang@ust.hk
}
\thanks{$^\ast$Zhibo Wang is the corresponding author. Copyright may be transferred without notice, after which this version may no longer be accessible.}
}

\maketitle

\begin{abstract}

Collaborative inference deploys Large Vision-Language Models (LVLMs) by partitioning computation between edge devices and the cloud. While withholding raw inputs supposedly ensures privacy, transmitting intermediate hidden states exposes a critical attack surface. However, it remains unclear whether deep-layer LVLM hidden states retain recoverable private information, given that visual content has been projected into the language embedding space. To address this concern, we theoretically analyze LVLM hidden-state recoverability and show that, under regularity assumptions and a positive semantic--nuisance margin, privacy-relevant visual semantics remain identifiable and stably recoverable. Motivated by this analysis, we propose RASR, a novel coarse-to-fine multimodal reconstruction attack. RASR obtains initial image and text reconstructions through modality-specific inverse paths that follow their respective forward processing pipelines in reverse, and then uses hidden-state consistency to refine both reconstructions. Evaluations on Qwen3-VL-8B-Instruct and LLaVA-1.5-7B across five datasets demonstrate that RASR reduces image reconstruction MSE by \(\sim\)50\% compared to the strongest baselines, while achieving up to 99\% token accuracy for text recovery. These results show that privacy-sensitive visual and textual information can be recovered even from deep-layer LVLM hidden states, exposing the privacy risks of collaborative inference.

\end{abstract}

\begin{IEEEkeywords}
Collaborative Inference, Data Privacy, Vision-Language Models
\end{IEEEkeywords}

\section{Introduction}
Large Vision-Language Models (LVLMs) can jointly process visual and textual inputs, perceiving images and reasoning about them in natural language. This capability has made LVLMs widely used across a range of applications, including visual question answering, image captioning, and multimodal dialogue. However, the large parameter size and memory footprint of LVLMs make it difficult to run them directly on resource-constrained edge devices. A widely adopted solution is collaborative inference, in which the model is split into a front-end part that runs locally on the device and a back-end part that runs on a cloud server. The local device executes the first several layers and sends only the resulting intermediate hidden states to the cloud, which completes the remaining computation. Because the raw image and text never leave the local device, this approach is commonly believed to enable efficient inference while preserving data privacy.

In this paper, we question this presumed privacy guarantee. While raw inputs are withheld, prior research in unimodal contexts has demonstrated that transmitting intermediate representations does not inherently guarantee privacy. In language models, sensitive input text can be reconstructed from intermediate activations or gradients~\cite{chen2024unveiling,nikolaou2025language,zhao2025rep2text}. Similarly, in traditional vision models (e.g., CNNs and Vision Transformers), intermediate activations from shallow layers can be exploited to recover sensitive image content~\cite{dosovitskiy2016inverting,rathjens2024inverting,hu2024sparse}. These vulnerabilities raise an alarming security question for multimodal ecosystems: Since collaborative LVLM inference exposes intermediate hidden states derived from both visual and textual inputs, can an adversary reconstruct the user's original multimodal inputs from these hidden states? A recent survey on LVLM attacks also points out that the privacy landscape of LVLMs remains poorly understood~\cite{11127221}.

Answering this question is far from a trivial extension of existing unimodal inversion attacks. Although recent work~\cite{nikolaou2025language} has proved that text can be recovered from unimodal language-model hidden states by exploiting the discreteness of the text space, the recoverability of visual inputs from deep LVLM hidden states remains unclear, as visual reconstruction requires searching a high-dimensional continuous space without comparable discrete constraints. Moreover, the multimodal setting introduces further challenges. In LVLMs, the visual input is first encoded and projected into the language embedding space, and then processed together with the textual tokens through multiple Transformer layers.  As a result, the exposed visual hidden states have undergone both cross-modal projection and deep transformation, and each textual hidden state is conditioned on the visual prefix through causal self-attention. These challenges led us to answer two fundamental sub-questions for the original research question. First, \textit{do split-layer hidden states retain privacy-relevant visual semantics?} Second, if such information is retained, \textit{can an adversary reconstruct the original visual and textual inputs from these LVLM hidden states, and how effective is such reconstruction?}

To address these questions, we first analyze whether privacy-relevant visual semantics remain recoverable after cross-modal alignment and LLM propagation. By modeling local image variations using semantic and nuisance factors, we derive a sufficient condition for local semantic recovery. Specifically, under local regularity assumptions and a positive semantic--nuisance separation margin, privacy-relevant visual semantics remain locally identifiable and stably recoverable. Empirical measurements further show that the evaluated semantics remain decodable and that the measured margins remain positive across model depth, providing support for our analysis.

Motivated by this recoverability analysis, we propose RASR (Recoverability-Aware Symmetric Reconstruction), a coarse-to-fine multimodal reconstruction attack for recovering images and text instructions from intercepted hidden states. RASR employs two modality-specific reconstruction branches whose architectures mirror the corresponding forward pathways of the target LVLM. In the \textit{coarse} stage, the textual branch predicts all tokens in a single pass, and the visual branch progressively maps the visual hidden states back through the visual-embedding and visual-feature spaces to produce an initial image reconstruction. In the \textit{fine} stage, RASR uses hidden-state consistency to selectively correct low-confidence textual tokens through position-wise search and iteratively refine the image by minimizing the discrepancy between the intercepted and regenerated hidden states.

Our main contributions can be summarized as follows.

\begin{itemize}[leftmargin=*]
\item We provide the first theoretical analysis of hidden-state recoverability in collaborative LVLM inference. We derive an explicit sufficient condition under which privacy-relevant visual semantics remain locally identifiable and stably recoverable after cross-modal alignment and LLM propagation, and provide empirical evidence consistent with this condition on real LVLMs.

\item We propose RASR, a coarse-to-fine multimodal reconstruction attack that recovers images and text from an intercepted hidden-state sequence. RASR combines modality-specific symmetric reconstruction with hidden-state-consistency-guided refinement to progressively improve both visual and textual reconstructions.

\item Through extensive evaluations on leading LVLM architectures and diverse datasets, we demonstrate that RASR significantly outperforms existing baselines in reconstruction performance. Our findings bridge the gap between theoretical recoverability and practical attack efficacy, underscoring the urgent need for robust defenses in collaborative LVLM inference systems.
\end{itemize}
\section{Related Work}

\subsection{LVLMs and Collaborative Inference}
\subsubsection{\textbf{Large Vision-Language Models}}
Large Vision-Language Models bridge visual perception and language reasoning~\cite{10769058,Li_2025_CVPR}. Unlike earlier dual-encoder models such as CLIP~\cite{radford2021learning}, current LVLMs~\cite{liu2023visual,liu2024improved,zhang2024llavanextvideo,Qwen-VL,bai2025qwen25vltechnicalreport,bai2025qwen3vltechnicalreport} typically employ an LLM as the reasoning backbone and project visual features into its token space through an alignment module. However, their substantial computational and memory demands make end-to-end inference on resource-constrained devices challenging, motivating collaborative inference.

\subsubsection{\textbf{Collaborative Inference}}
In collaborative inference, model computation is partitioned across multiple devices. Neurosurgeon~\cite{kang2017neurosurgeon} studies layer-level partitioning of conventional DNNs between mobile devices and cloud servers. Recent systems extend this paradigm to large Transformer models: SplitLLM~\cite{mudvari2024splitllm} distributes LLM layers between clients and servers, Model-Distributed Inference~\cite{11154542} partitions LLMs across edge devices, and Petals~\cite{borzunov-etal-2023-petals} distributes Transformer blocks across geographically separated nodes. Although these systems exchange intermediate activations or hidden states, they primarily focus on deployment efficiency rather than the privacy risks of the transmitted representations.

\subsection{Privacy Inversion Attacks}
\subsubsection{\textbf{Vision Models}}
Prior studies show that intermediate CNN representations may retain sufficient information for reconstructing visual inputs~\cite{mahendran2015understanding,dosovitskiy2016inverting,zhang2020secret,298176}. Recent attacks extend image inversion to Transformer-based vision models using gradients, embedding optimization, sparse representations, or learned inverse mappings~\cite{hatamizadeh2022gradvit,kazemi2024we,hu2024sparse,rathjens2024inverting}. However, these studies focus on unimodal vision models and do not determine whether visual information remains recoverable after cross-modal alignment and propagation through an LLM backbone.

\subsubsection{\textbf{Language Models}}
Early language-model inversion attacks mainly recover specific attributes, patterns, or keywords~\cite{song2020information,pan2020privacy}. More recent methods reconstruct complete text from split-learning transmissions, model outputs, sentence embeddings, or hidden representations~\cite{chen2024unveiling,zhang2024extracting,morris2023text,qu2025prompt,zhao2025rep2text,nikolaou2025language}. Nevertheless, these methods are designed for unimodal language models and generally do not consider the visual context incorporated into LVLM hidden states.

Recent cross-modal attacks remain limited to recovering visual attributes or training data~\cite{xiu2025caprecover,nguyen2026visionlanguagemodelsleaklearn}. The reconstruction of both image and text inputs from intermediate hidden states in collaborative LVLM inference remains underexplored~\cite{11127221}. Our work addresses this gap by systematically investigating their recoverability from transmitted hidden states.
\section{Preliminary}

This section first describes how multimodal inputs are processed and how intermediate hidden states are transmitted from edge devices to the cloud, and then specifies the adversary's capabilities and objective.

\subsection{System Model}
\subsubsection{\textbf{LVLM Architecture}}
In this paper, we consider a widely adopted LVLM architecture that employs an LLM as the central backbone~\cite{liu2023visual,liu2024improved,zhang2024llavanextvideo,Qwen-VL,bai2025qwen25vltechnicalreport,bai2025qwen3vltechnicalreport}. It consists of a visual encoder $f_v$, also referred to as the vision tower, a cross-modal aligner $f_a$, and an LLM comprising a token embedding layer $f_e$, a Transformer backbone $f_{\mathrm{tr}}$, and an output head.

Formally, given an input image $\mathbf{I}$, the visual encoder extracts features $\mathbf{Z}=f_v(\mathbf{I})$, which the aligner maps to an ordered sequence of visual tokens $\mathbf{V}=f_a(\mathbf{Z})\in\mathbb{R}^{n\times d}$, where $n$ is the number of visual tokens and $d$ is the LLM hidden dimension. Let $\mathbf{T}=(t_1,\ldots,t_m)$ denote the tokenized text instruction, with embeddings $\mathbf{E}_T=f_e(\mathbf{T})\in\mathbb{R}^{m\times d}$. The concatenated multimodal sequence $[\mathbf{V};\mathbf{E}_T]$ is processed by the Transformer backbone, and the resulting contextualized representations are further processed by the output head to generate the response $\mathbf{Y}$ through autoregressive decoding. Special tokens, prompt-template tokens, and padding positions are omitted for clarity.

\subsubsection{\textbf{Collaborative Inference}}
In this paper, we consider a two-participant abstraction consisting of a front-end participant and a back-end participant, which also captures a single partition boundary in a multi-participant deployment~\cite{kang2017neurosurgeon,mudvari2024splitllm,11154542,borzunov-etal-2023-petals}. Specifically, the partition boundary is placed after the $s$-th layer of the Transformer backbone. The front-end participant hosts the visual encoder, the cross-modal aligner, the token embedding layer, and the first $s$ Transformer layers, while the back-end participant hosts the remaining Transformer layers and the output head. Let $F_{\mathrm{pre}}^{(s)}$ and $F_{\mathrm{post}}^{(s)}$ denote the Transformer sub-networks before and after the partition boundary, respectively. The Transformer backbone can therefore be decomposed as
$
f_{\mathrm{tr}}
=
F_{\mathrm{post}}^{(s)}
\circ
F_{\mathrm{pre}}^{(s)} .
$

The front-end participant first constructs the unified multimodal sequence and then computes the split-layer hidden states as
$
\mathbf{H}^{(s)}
=
\Phi_{\leq s}(\mathbf{I},\mathbf{T};\theta)
=
F_{\mathrm{pre}}^{(s)}
\bigl([\mathbf{V};\mathbf{E}_T]\bigr),
$
where $\Phi_{\leq s}$ denotes the complete front-end mapping up to the partition boundary, and $\theta$ contains the parameters of the visual encoder, cross-modal aligner, token embedding layer, and first $s$ Transformer layers. The transmitted sequence can be written as $\mathbf{H}^{(s)}=[\mathbf{H}_v^{(s)};\mathbf{H}_t^{(s)}]\in\mathbb{R}^{(n+m)\times d}$, where $\mathbf{H}_v^{(s)}\in\mathbb{R}^{n\times d}$ and $\mathbf{H}_t^{(s)}\in\mathbb{R}^{m\times d}$ denote the hidden states at the visual-token and textual-token positions, respectively. Here, the subscripts $v$ and $t$ refer only to the corresponding token positions in the original multimodal sequence, since both parts have already been contextualized by the Transformer layers.

The hidden states $\mathbf{H}^{(s)}$ are transmitted to the back-end participant, which processes them through the remaining Transformer layers and the output head to generate the final response $\mathbf{Y}$. In this work, we focus on the privacy risks posed by the transmission of $\mathbf{H}^{(s)}$.

\begin{figure}[!t]
    \centering
    \includegraphics[width=0.98\columnwidth]{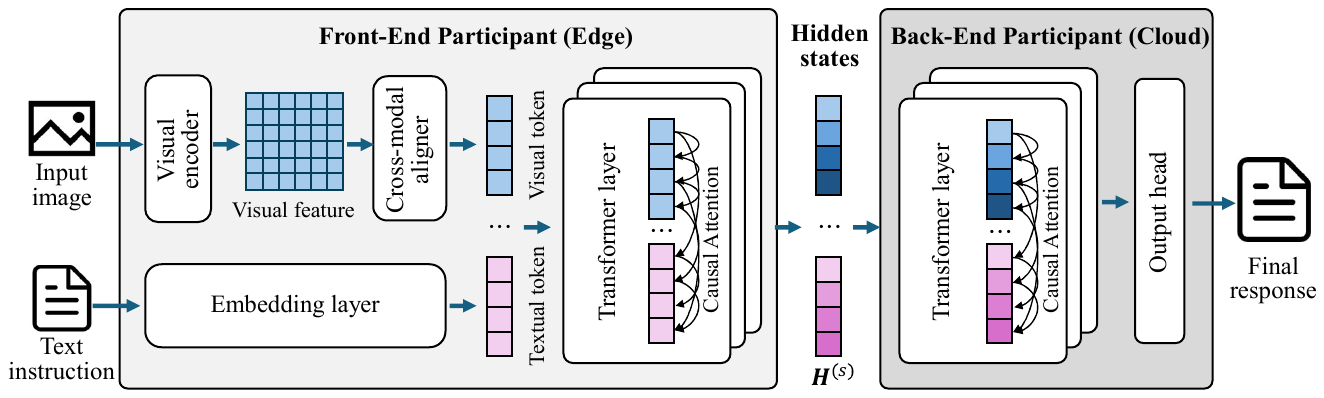}
    \caption{Collaborative inference pipeline of LVLMs.}
    \label{fig:system}
\end{figure}

\subsection{Threat Model}

\subsubsection{\textbf{Adversary Scenario}}
In this paper, we consider an \textit{honest-but-curious} back-end participant (e.g., a cloud server) that receives the transmitted hidden states during collaborative LVLM inference as the adversary. The adversary follows the prescribed inference protocol and does not modify the transmitted representations, interfere with model execution, or tamper with the final outputs. However, it attempts to infer private information from the intermediate representations exposed during collaborative inference. Specifically, the adversary can observe the transmitted hidden states $\mathbf{H}^{(s)}$, but cannot directly access the original image $\mathbf{I}$ or the text instruction $\mathbf{T}$.

\subsubsection{\textbf{Adversary Knowledge and Capability}}
Following prior work~\cite{chen2024unveiling,nikolaou2025language,mahendran2015understanding,qu2025prompt}, we assume that the adversary knows the target LVLM architecture and collaborative inference configuration, including the partition boundary, hidden-state dimensions, and sequence structure. The adversary also has access to the parameters of the front-end mapping $\Phi_{\leq s}$, enabling it to reproduce the forward computation and propagate gradients up to the partition boundary. Such access is feasible when the deployed LVLM weights are publicly available or the corresponding model checkpoint is accessible to participating parties. The adversary may use public auxiliary image-text data, but cannot access the victim's original inputs or the front-end participant's private runtime environment.

\subsubsection{\textbf{Adversary Objective}}
The adversary's primary objective is to compromise user privacy by recovering sensitive visual and textual information from the transmitted hidden states. The adversary does not aim to disrupt the normal collaborative inference process, but instead observes the hidden states $\mathbf{H}^{(s)}$ and reconstructs the victim's private inputs:
$
(\hat{\mathbf{I}},\hat{\mathbf{T}})
=
R_{\psi}\bigl(\mathbf{H}^{(s)}\bigr),
\label{eq:adversary_objective}
$
where $R_{\psi}$ denotes the adversary's reconstruction mapping parameterized by $\psi$, and $\hat{\mathbf{I}}$ and $\hat{\mathbf{T}}$ denote the reconstructed image and text instruction, respectively.
\section{Recoverability of LVLM Hidden States}
\label{sec:recoverability}

This section gives a sufficient condition for locally recovering privacy-relevant visual semantics from exposed LVLM hidden states without requiring global or exact pixel-level invertibility. The condition is quantified by a quotient margin that separates semantic responses from nuisance variations and is estimated using Jacobian--vector products in Section~\ref{sec:rq1-recoverability}.

\subsection{Quotient-Transverse Semantic Recoverability}

Fix a text instruction $\mathbf{T}$ and a split layer $s$. In a neighborhood of a natural image, let
\begin{equation}
    \mathbf{I}=G(c,u), \quad
    h_s(c,u)=\operatorname{vec}\!\left(\Phi_{\leq s}(G(c,u),\mathbf{T};\theta)\right).
    \label{eq:local-hidden-map}
\end{equation}
where $G$ is a local parameterization of natural images, $c\in\mathbb{R}^{d_c}$ denotes selected perceptually relevant semantic coordinates, $u\in\mathbb{R}^{d_u}$ denotes nuisance factors not specified by $c$, and $h_s(c,u)\in\mathbb{R}^{d_h}$ denotes the vectorized layer-$s$ hidden state. Define
\begin{equation}
\begin{aligned}
    J_c &= \frac{\partial h_s}{\partial c},
    & J_u &= \frac{\partial h_s}{\partial u},\\
    \Pi_u^\perp &= I_{d_h}-J_uJ_u^\dagger,
    & S_s &= \Pi_u^\perp J_c,
\end{aligned}
\label{eq:quotient-semantic-response}
\end{equation}
where $I_{d_h}$ is the $d_h$-dimensional identity matrix, ${}^{\dagger}$ denotes the Moore--Penrose pseudoinverse, and $\Pi_u^\perp$ is the orthogonal projector onto the complement of the nuisance-response space $\operatorname{Im}(J_u)$. Thus, $S_s$ measures the hidden-state response to semantic variations after removing components explainable by nuisance variations.

At a reference point $x_0=(c_0,u_0)$, assume that:
\begin{enumerate}[
    label=(C\arabic*),
    leftmargin=*,
    itemsep=1pt,
    topsep=2pt
]
    \item $h_s$ is twice continuously differentiable on an open neighborhood of $x_0$;
    \item $J_u$ has constant rank $r$ on a neighborhood of $x_0$;
    \item the quotient semantic margin
    $\gamma_0:=\sigma_{\min}(S_s(c_0,u_0))$ is positive.
\end{enumerate}
Condition~(C3) requires every nonzero semantic direction to induce a hidden-state variation that cannot be locally canceled by nuisance-induced variations. This condition depends only on the local geometry of the forward hidden-state mapping and does not presuppose the existence of an inverse decoder.

\begin{theorem}[Local semantic recovery from quotient transversality]
\label{thm:quotient-recoverability}
Under Conditions~(C1)--(C3), there exist a neighborhood $W_0$ of $x_0$, a local hidden-state manifold $\mathcal{M}_0=h_s(W_0)$, and a continuously differentiable local semantic decoder $\phi_s:\mathcal{M}_0\rightarrow\mathbb{R}^{d_c}$ such that
\begin{equation}
    \phi_s(h_s(c,u))=c,
    \qquad (c,u)\in W_0.
    \label{eq:local-semantic-decoder}
\end{equation}
Moreover, $W_0$ can be chosen such that
\begin{equation}
    \underline{\gamma}
    :=
    \inf_{(c,u)\in W_0}
    \sigma_{\min}\!\left(S_s(c,u)\right)
    \geq \frac{\gamma_0}{2}>0.
    \label{eq:neighborhood-quotient-margin}
\end{equation}
For any $(c,u)\in W_0$ and any tangent hidden-state perturbation
\begin{equation}
    e=J_c(c,u)\Delta c+J_u(c,u)\Delta u
    \in T_{h_s(c,u)}\mathcal{M}_0,
\end{equation}
the semantic decoder satisfies
\begin{equation}
    \left\|D\phi_s(h_s(c,u))e\right\|_2
    =\|\Delta c\|_2
    \leq\underline{\gamma}^{-1}\|e\|_2.
    \label{eq:differential-stability}
\end{equation}
Consequently, for $h,h'\in\mathcal{M}_0$ with $h'\rightarrow h$,
\begin{equation}
    \|\phi_s(h')-\phi_s(h)\|_2
    \leq
    \underline{\gamma}^{-1}\|h'-h\|_2
    +o(\|h'-h\|_2).
    \label{eq:finite-local-stability}
\end{equation}
\end{theorem}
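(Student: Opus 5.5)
The plan is to reduce everything to the constant rank theorem applied to the full map $h_s$ in the joint variable $x=(c,u)$. The quotient margin then does two jobs: it shows that $c$ is constant on the fibres of $h_s$, and it controls the norm of the induced decoder.

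\textbf{Step 1 (margin persists nearby).} By (C1), $J_c$ and $J_u$ are $C^1$ in $x$. By (C2), $J_u$ has constant rank $r$ near $x_0$, so $J_u^\dagger$ is continuous there (the pseudoinverse is continuous exactly on constant-rank sets), and hence $\Pi_u^\perp$ and $S_s$ are continuous. Singular values are $1$-Lipschitz in the matrix, so $\sigma_{\min}(S_s(x))\ge\gamma_0/2$ on a small open neighborhood. This gives \eqref{eq:neighborhood-quotient-margin} for any $W_0$ inside it. In particular $S_s$ has full column rank $d_c$ throughout.

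\textbf{Step 2 (constant rank of the full Jacobian).} Let $Dh_s=[J_c\;J_u]$ and suppose $J_c\Delta c+J_u\Delta u=0$. Applying $\Pi_u^\perp$ kills the $J_u$ term and leaves $S_s\Delta c=0$, so $\Delta c=0$. Hence
\[
\ker Dh_s=\{0\}\times\ker J_u
\qquad\text{and}\qquad
\operatorname{rank}Dh_s=d_c+r,
\]
which is constant near $x_0$.

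\textbf{Step 3 (construct the decoder).} By the constant rank theorem there are charts around $x_0$ and $h_s(x_0)$ in which $h_s$ becomes the linear projection onto the first $d_c+r$ coordinates. I choose $W_0$ to be the preimage of a small cube in the domain chart. Then the fibres of $h_s|_{W_0}$ are connected (slices of the cube), and $\mathcal M_0=h_s(W_0)$ is an embedded $C^1$ submanifold of dimension $d_c+r$. The tangent spaces of the fibres are $\ker Dh_s$, so the $c$-component of any fibre tangent vanishes by Step 2. Hence $c$ is constant on each connected fibre and descends to a map $\phi_s$ on $\mathcal M_0$. In the chart, $\phi_s$ is $c$ composed with the inverse of the slice embedding, so it is $C^1$. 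This gives \eqref{eq:local-semantic-decoder}.

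\textbf{Step 4 (differential bound).} Differentiating $\phi_s\circ h_s=c$ gives $D\phi_s\,(J_c\Delta c+J_u\Delta u)=\Delta c$. For the bound,
\[
\|e\|_2\ge\|\Pi_u^\perp e\|_2=\|S_s\Delta c\|_2\ge\underline\gamma\,\|\Delta c\|_2,
\]
which is \eqref{eq:differential-stability}.

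\textbf{Step 5 (finite stability).} I extend $\phi_s$ to a $C^1$ map on an open set of $\mathbb R^{d_h}$ by composing with the chart projection. For $h,h'\in\mathcal M_0$ with $h'\to h$, write $h'-h=P_T(h'-h)+o(\|h'-h\|)$, where $P_T$ is the orthogonal projector onto $T_h\mathcal M_0$; this holds because $\mathcal M_0$ is $C^1$. A first-order Taylor expansion then gives
\[
\phi_s(h')-\phi_s(h)=D\phi_s(h)\,P_T(h'-h)+o(\|h'-h\|).
\]
Applying Step 4 and $\|P_T(h'-h)\|\le\|h'-h\|$ yields \eqref{eq:finite-local-stability}.

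\textbf{Main obstacle.} The delicate part is Step 3. One must ensure the fibres in $W_0$ are connected, so that a locally constant $c$ is actually constant on them, and that $\mathcal M_0$ is embedded rather than merely immersed. Both follow by shrinking $W_0$ to a chart cube, but this shrinking must be done after Step 1 so the margin bound is retained.
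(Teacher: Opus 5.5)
Your proof is correct. Steps 1--4 follow the paper's own argument closely: continuity of $J_u^\dagger$ from (C2) together with continuity of singular values for the margin, the kernel inclusion obtained by applying $\Pi_u^\perp$, the rank count $r+d_c$, the constant-rank theorem with connected fibres, and differentiating $\phi_s\circ h_s=\pi_c$. Your Step 3 spells out, through the chart cube, the ``local factorization theorem'' that the paper only cites.

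The genuine difference is in the last claim, the finite stability estimate.

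\emph{The paper's route.} It joins $h$ and $h'$ by a $C^1$ curve in $\mathcal{M}_0$ of length $\|h'-h\|_2+o(\|h'-h\|_2)$ and integrates the tangent bound along that curve. This uses the uniform constant $\underline{\gamma}$ at every point of the curve. In return it gives a non-asymptotic Lipschitz-type estimate: $\|\phi_s(h')-\phi_s(h)\|_2\le\underline{\gamma}^{-1}$ times the length of any curve in $\mathcal{M}_0$ joining them. The cost is that one must produce a nearly straight curve inside the manifold, which the paper asserts rather than constructs.

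\emph{Your route.} You extend $\phi_s$ to an ambient $C^1$ map through the chart. You then combine a first-order expansion at $h$ with the fact that secants of an embedded $C^1$ submanifold are tangent to first order. This needs only $C^1$ regularity of $\mathcal{M}_0$ and the derivative bound at the single base point $h$. As a result, the limiting constant is governed by the margin at a preimage of $h$ rather than by the infimum over $W_0$. It yields nothing beyond the asymptotic statement, but that is all the theorem claims, so both routes suffice.
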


\begin{proof}
By Condition~(C2), $J_u^\dagger$ and hence $\Pi_u^\perp$ vary continuously in a sufficiently small neighborhood of $x_0$. Condition~(C3) and the continuity of singular values then allow the neighborhood to be chosen such that $\sigma_{\min}(S_s)\geq\gamma_0/2$, which gives~\eqref{eq:neighborhood-quotient-margin}.

For $(\Delta c,\Delta u)\in\ker Dh_s(c,u)$,
\[
    J_c\Delta c+J_u\Delta u=0
    \quad\Longrightarrow\quad
    S_s\Delta c=0
    \quad\Longrightarrow\quad
    \Delta c=0,
\]
where the first implication follows by applying $\Pi_u^\perp$, and the second follows because $S_s$ has full column rank. Therefore, with $\pi_c(c,u):=c$, we have
\begin{equation}
    \ker Dh_s(c,u)
    \subseteq
    \ker D\pi_c(c,u).
    \label{eq:kernel-inclusion}
\end{equation}
Since $\Pi_u^\perp J_c$ lies in the orthogonal complement of $\operatorname{Im}(J_u)$,
\[
\begin{aligned}
    \operatorname{rank}Dh_s
    &=\operatorname{rank}[J_c\;J_u]\\
    &=\operatorname{rank}(J_u)
      +\operatorname{rank}(\Pi_u^\perp J_c)\\
    &=r+d_c.
\end{aligned}
\]
Hence, after shrinking $W_0$ if necessary, $h_s$ has constant rank $r+d_c$. The constant-rank theorem then allows $W_0$ to be chosen such that $\mathcal{M}_0=h_s(W_0)$ is an embedded submanifold and each fiber of $h_s|_{W_0}$ is connected. Together with~\eqref{eq:kernel-inclusion}, this implies that $\pi_c$ is constant along each such fiber. The local factorization theorem therefore yields a continuously differentiable map $\phi_s$ satisfying $\phi_s\circ h_s=\pi_c$, which proves~\eqref{eq:local-semantic-decoder}.

Differentiating $\phi_s\circ h_s=\pi_c$ gives
\begin{equation}
    D\phi_sJ_c=I_{d_c},
    \qquad
    D\phi_sJ_u=0.
\end{equation}
Thus, for $e=J_c\Delta c+J_u\Delta u$, we have $D\phi_s e=\Delta c$. Moreover,
\begin{equation}
    \|e\|_2\geq\|\Pi_u^\perp e\|_2=\|S_s\Delta c\|_2\geq\underline{\gamma}\|\Delta c\|_2,
\end{equation}
which proves~\eqref{eq:differential-stability}.

Since $\mathcal{M}_0$ is a local $C^2$ embedded manifold, for $h'$ sufficiently close to $h$, there exists a $C^1$ curve in $\mathcal{M}_0$ joining $h$ and $h'$ whose length is $\|h'-h\|_2+o(\|h'-h\|_2)$ as $h'\rightarrow h$. Integrating the differential bound along this curve yields~\eqref{eq:finite-local-stability}.
\end{proof}

\subsection{Scope of the Result}
Theorem~\ref{thm:quotient-recoverability} establishes a locally stable semantic left inverse for the selected semantic coordinates $c$, rather than a full inverse for $(c,u)$. This existence result motivates the learned reconstruction model introduced in Section~\ref{sec:method}, while recovery beyond the selected semantic coordinates is evaluated empirically. Section~\ref{sec:rq1-recoverability} examines finite-dimensional signatures of the sufficient condition, including semantic readability, non-degenerate semantic responses, and the quotient margin after removing measured nuisance directions. Because these quantities are estimated using selected semantic and nuisance directions at finitely many samples and split layers, the experiments provide evidence consistent with local recoverability rather than establishing a uniform guarantee.
\begin{figure*}[!t]
    \centering

    % Left: pipeline
    \begin{minipage}[c]{0.74\textwidth}
        \centering
        \subfloat[Overall pipeline of RASR with symmetric reconstruction and consistency-guided refinement.\label{fig:pipe}]{
            \includegraphics[width=\linewidth]
            {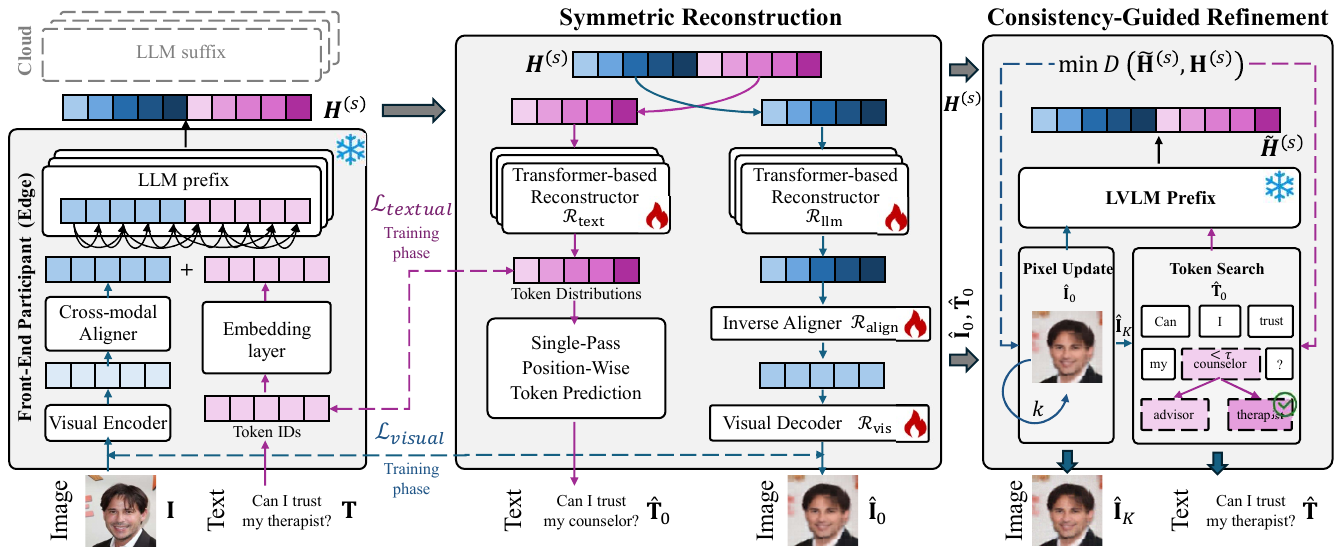}
        }
    \end{minipage}
    % \hfill
    \hspace{1em}
    % Right: feature visualizations
    \begin{minipage}[c]{0.158\textwidth}
        \centering
        \subfloat[Feature space (LLaVA).\label{fig:llava_feature_space_visualization}]{
            \includegraphics[width=1\linewidth]
            {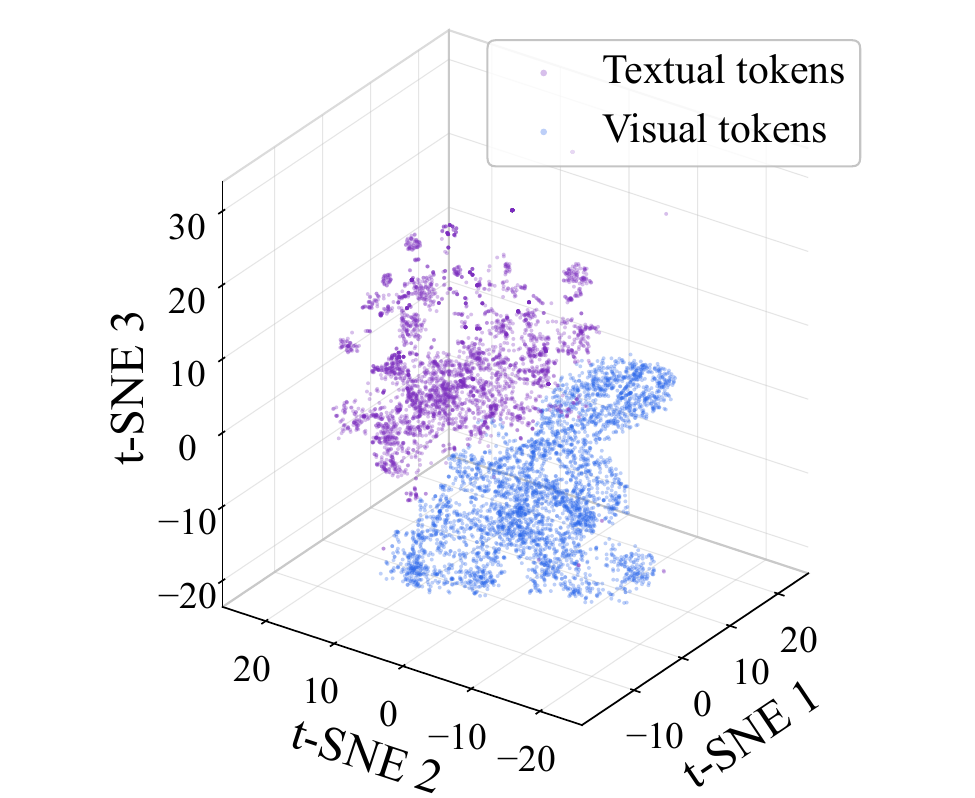}
        }

        \subfloat[Feature space (Qwen).\label{fig:qwen_feature_space_visualization}]{
            \includegraphics[width=1\linewidth]
            {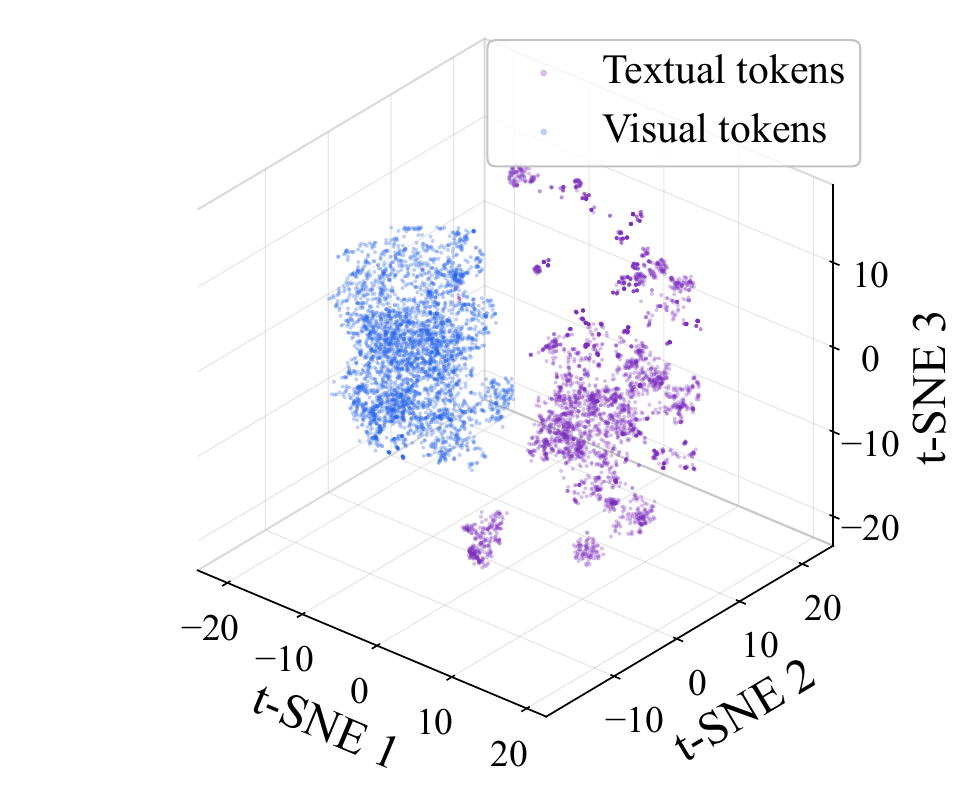}
        }
    \end{minipage}

    \caption{Overview of the proposed reconstruction pipeline and the
    feature-space distributions of visual and textual hidden states.}
    \label{fig:pipeline_and_feature_space}
\end{figure*}

\section{Method}
\label{sec:method}

This section presents our inverse attack for revealing privacy risks in large vision-language models (LVLMs). After introducing the overall framework in \cref{sec:Overview}, we detail the symmetric reconstruction architecture in \cref{sec:symmetric_reconstruction} and the consistency-guided refinement strategy in \cref{sec:consistency_refinement}.

\subsection{Overview}\label{sec:Overview}
Building on the preceding recoverability analysis, we design an inverse attack to exploit the recoverable information preserved in split LVLM hidden states. The analysis shows that the split-layer mapping does not uniformly collapse visual information. Instead, privacy-relevant visual variations can remain distinguishable in the hidden-state space, suggesting that split LVLM hidden states may preserve recoverable visual semantics even though they are no longer raw inputs. Motivated by this observation, we construct modality-specific inverse mappings and further refine their outputs through hidden-state consistency to recover privacy-relevant input information.

Fig.~\ref{fig:pipe} illustrates the overall pipeline of our inverse attack. It follows a coarse-to-fine strategy comprising two modules: Symmetric Reconstruction and Consistency-Guided Refinement. Given the intercepted hidden states, Symmetric Reconstruction first separates the visual and textual components and obtains initial image and text reconstructions through modality-specific inverse paths that approximately reverse the corresponding forward stages. Consistency-Guided Refinement then updates these reconstructions to reduce the discrepancy between their re-encoded hidden states and the intercepted target hidden states.

\subsection{Symmetric Reconstruction Architecture}
\label{sec:symmetric_reconstruction}

In collaborative LVLM inference, the visual input is encoded by the visual encoder, projected into the language embedding space by the cross-modal aligner, concatenated with textual tokens, and subsequently processed by the LLM backbone through causal self-attention. Consequently, the representations transmitted across the partition boundary are contextualized hidden states in a shared LLM hidden space rather than isolated modality-specific features. Although cross-modal interaction progressively transforms these representations, the visual-prefix layout and the positional correspondence of the original multimodal sequence remain unchanged.

To explore whether modality-dependent structure is preserved after such propagation, we visualize the hidden states at visual and textual token positions using t-SNE. As shown in \cref{fig:llava_feature_space_visualization,fig:qwen_feature_space_visualization}, the two modalities still form distinguishable distributions, suggesting that LLM contextualization does not completely eliminate their modality-dependent organization. Given the intercepted sequence $\mathbf{H}^{(s)}=(h_1,\ldots,h_L)$ and the known visual-prefix length $b$, we separate it into $\mathbf{H}_v=(h_1,\ldots,h_b)$ and $\mathbf{H}_t=(h_{b+1},\ldots,h_L)$. After separation, the two modalities require different reconstruction strategies because visual hidden states are produced from images through multiple successive continuous transformations, whereas textual hidden states retain a position-wise correspondence with discrete input tokens. Thus, we process $\mathbf{H}_v$ and $\mathbf{H}_t$ using dedicated visual and textual branches that approximately reverse the major transformations along their respective forward paths.

Following this symmetric reconstruction design, we begin with visual reconstruction. Rather than mapping $\mathbf{H}_v$ directly to raw pixels with a single reconstruction network, we decompose the inverse mapping according to the major stages of the forward visual path. In the forward direction, the image is mapped to visual features by the visual encoder, projected into aligned visual tokens by the cross-modal aligner, and subsequently transformed into the intercepted visual hidden states by the LLM layers preceding the partition. Accordingly, we construct a hierarchical visual branch that approximately reverses this transformation sequence:
\begin{equation}
\mathbf{I}
\rightarrow
\mathbf{Z}
\rightarrow
\mathbf{V}
\rightarrow
\mathbf{H}_v,
\quad
\mathbf{H}_v
\xrightarrow{\mathcal{R}_{\mathrm{llm}}}
\hat{\mathbf{V}}
\xrightarrow{\mathcal{R}_{\mathrm{align}}}
\hat{\mathbf{Z}}
\xrightarrow{\mathcal{R}_{\mathrm{vis}}}
\hat{\mathbf{I}}.
\label{eq:hierarchical_visual_reconstruction}
\end{equation}
Specifically, since $\mathbf{H}_v$ and $\mathbf{V}$ are both ordered token sequences with position-wise correspondence, a Transformer is naturally compatible with the required sequence-to-sequence mapping while capturing cross-token dependencies. We therefore implement $\mathcal{R}_{\mathrm{llm}}$ as a Transformer-based reconstructor that maps $\mathbf{H}_v$ back to the aligned visual tokens $\hat{\mathbf{V}}$. $\mathcal{R}_{\mathrm{align}}$ then projects $\hat{\mathbf{V}}$ into the visual-feature space to obtain $\hat{\mathbf{Z}}$. Finally, $\mathcal{R}_{\mathrm{vis}}$ restores the spatial patch layout and progressively decodes $\hat{\mathbf{Z}}$ into the reconstructed image $\hat{\mathbf{I}}$.

With the hierarchical visual branch established, we optimize its three stages jointly rather than imposing separate reconstruction objectives on their intermediate outputs. Specifically, although structurally aligned with the corresponding forward-path representations, $\hat{\mathbf{V}}$ and $\hat{\mathbf{Z}}$ serve only as latent transition states rather than explicit reconstruction targets. We do not supervise them directly because enforcing point-wise agreement with the original activations could unnecessarily restrict the latent solution space and hinder final image reconstruction. Accordingly, we define the optimization objective solely in the image space:
\begin{equation}
\mathcal{L}_{\mathrm{visual}}
=
\frac{1}{|\Omega|}
\left\|
\hat{\mathbf{I}}-\mathbf{I}
\right\|_{2}^{2},
\label{eq:visual_reconstruction_loss}
\end{equation}
where $\mathbf{I}$ denotes the original image and $|\Omega|$ is the number of image elements.

For textual reconstruction, the intercepted textual hidden states preserve both their position-wise correspondence with the input tokens and the causal dependency structure induced by the decoder-only Transformer backbone $\mathcal R_{\mathrm{text}}$. Therefore, we employ a structurally matched decoder-only Transformer parameterized by $\psi_t$ to recover the aligned token sequence. After re-indexing the separated textual positions, let $\mathbf{H}_t=(h_1^t,\ldots,h_m^t)$ and $\mathbf{T}=(t_1,\ldots,t_m)$ denote the textual hidden states and the corresponding ground-truth tokens, respectively. Under a causal attention mask, the output at position $i$ depends only on the hidden-state prefix $\mathbf{H}_t^{\leq i}=(h_1^t,\ldots,h_i^t)$ and predicts the aligned token $t_i$. Accordingly, the optimization objective is the position-wise cross-entropy loss:
\begin{equation}
\mathcal{L}_{\mathrm{textual}}
=
-\frac{1}{m}
\sum_{i=1}^{m}
\log p_{\psi_t}^{(i)}
\left(
t_i\mid\mathbf{H}_t^{\leq i}
\right),
\label{eq:text_reconstruction_loss}
\end{equation}
where $p_{\psi_t}^{(i)}(t_i\mid\mathbf{H}_t^{\leq i})$ is the probability of $t_i$ at position $i$. Since predicted tokens are not fed back into the reconstructor, causal masking does not require sequential decoding, and all positions are recovered in a single forward pass.

Overall, the visual and textual reconstruction branches are optimized using $\mathcal{L}_{\mathrm{visual}}$ and $\mathcal{L}_{\mathrm{textual}}$, respectively, thereby learning modality-specific inverse mappings from the intercepted hidden states to the image and text spaces. Their outputs, denoted by $(\hat{\mathbf{I}}_0,\hat{\mathbf{T}}_0)$, serve as the initial reconstructions for the subsequent consistency-guided refinement stage.

\subsection{Consistency-Guided Refinement}
\label{sec:consistency_refinement}

Although the symmetric reconstruction stage provides initial reconstructions, its learned inverse mappings may not fully exploit the information preserved in the intercepted hidden states. Therefore, we refine these reconstructions by matching their regenerated hidden states to the intercepted ones:
\begin{equation}
\min_{\mathbf{I}',\mathbf{T}'}
D\left(
\widetilde{\mathbf{H}}^{(s)}(\mathbf{I}',\mathbf{T}'),
\mathbf{H}^{(s)}
\right),
\label{eq:general-refinement}
\end{equation}
where $\mathbf{I}'$ and $\mathbf{T}'$ denote the candidate reconstructions, $\widetilde{\mathbf{H}}^{(s)}(\mathbf{I}',\mathbf{T}')=\Phi_{\leq s}(\mathbf{I}',\mathbf{T}';\theta)$ denotes their hidden states regenerated by the frozen front-end mapping, and $D(\cdot,\cdot)$ measures the hidden-state discrepancy.

For visual refinement, we fix $\hat{\mathbf{T}}_0$ and start from $\hat{\mathbf{I}}_0$. At iteration $k=0,\ldots,K-1$, we regenerate $\widetilde{\mathbf{H}}_k^{(s)}=\widetilde{\mathbf{H}}^{(s)}(\hat{\mathbf{I}}_k,\hat{\mathbf{T}}_0)$ and update the image by
\begin{equation}
    \hat{\mathbf{I}}_{k+1}
    =
    \Pi_{\mathcal{N}(\hat{\mathbf{I}}_0)}
    \left[
    \hat{\mathbf{I}}_k
    -
    \eta_k\nabla_{\hat{\mathbf{I}}_k}
    D\left(
    \widetilde{\mathbf{H}}_k^{(s)},
    \mathbf{H}^{(s)}
    \right)
    \right],
    \label{eq:image-refinement}
\end{equation}
where $\eta_k$ is the step size and $\Pi_{\mathcal{N}(\hat{\mathbf{I}}_0)}$ denotes projection onto the neighborhood $\mathcal{N}(\hat{\mathbf{I}}_0)$. The resulting $\hat{\mathbf{I}}_K$ is used as the visual context for textual refinement.

For textual refinement, we fix $\hat{\mathbf{I}}_K$, initialize $\hat{\mathbf{T}}=\hat{\mathbf{T}}_0$, and refine the low-confidence positions $\mathcal{P}_{\mathrm{opt}}=\{i\mid q_i<\tau\}$ from left to right, where $q_i$ is the highest token confidence at position $i$ and $\tau$ is the threshold. For each position $i$, we initialize a continuous variable $\mathbf{e}_i$ with its current token embedding and optimize it to minimize the hidden-state discrepancy, yielding $\mathbf{e}_i^\ast$. The candidate set $\mathcal{C}_i$ comprises the eight vocabulary tokens whose embeddings are nearest to $\mathbf{e}_i^\ast$ in Euclidean distance. Let $\hat{\mathbf{T}}^{(i\leftarrow v)}$ denote the current reconstruction with its $i$th token replaced by $v$, and $\widetilde{h}_i^t(v)$ the textual hidden state at position $i$ regenerated from $\bigl(\hat{\mathbf{I}}_K,\hat{\mathbf{T}}^{(i\leftarrow v)}\bigr)$. We select
\begin{equation}
    \hat{t}_i
    =
    \arg\min_{v\in\mathcal{C}_i}
    D\left(
    \widetilde{h}_i^t(v),
    h_i^t
    \right),
    \label{eq:text-refinement}
\end{equation}
where $h_i^t$ is the intercepted textual hidden state at position $i$. The selected token updates $\hat{\mathbf{T}}$ before the next position is processed, while positions outside $\mathcal{P}_{\mathrm{opt}}$ remain unchanged.

\section{Experiments}
\label{sec:experiments}
To evaluate the recoverability of hidden states and the effectiveness of our reconstruction attack, we conduct comprehensive experiments to answer the following research questions:

\begin{itemize}[leftmargin=*]
    \item \textbf{RQ1: Do split-layer hidden states retain privacy-relevant visual semantics?}
    \item \textbf{RQ2: How effectively can the proposed attack reconstruct visual and textual inputs from hidden states?}
\end{itemize}

\subsection{Experimental Setup}
\subsubsection{\textbf{Datasets and Pre-processing}}
For visual reconstruction, we use VQAv2~\cite{goyal2017making}, CelebA~\cite{liu2015deep}, and Oxford Flowers-102~\cite{nilsback2008automated}, with 88752/1000, 36468/1000, and 6551/1000 training/evaluation samples, respectively. VQAv2 and CelebA follow their official training/validation and training/test splits, while Flowers-102 uses a class-stratified split. VQAv2 uses its associated questions, whereas CelebA and Flowers-102 use annotation-derived instructions and are evaluated only on image reconstruction. For textual reconstruction, we use 88752/120 VQAv2 samples, 21962/120 FEVER~\cite{thorne2018fever} samples, and 20117/120 MS MARCO~\cite{nguyen2016ms} samples. Since FEVER and MS MARCO contain no images, their texts are paired with images from the corresponding VQAv2 training or evaluation pool. The selected VQAv2 training and evaluation subsets are disjoint at the image-ID level. All subsets and pairings are fixed using random seed 1024.

Images are resized to $336\times336$ and normalized using the default parameters of each LVLM. Texts are formatted and tokenized using the corresponding chat template and tokenizer.

\subsubsection{\textbf{Models and Implementation Details}}
We evaluate Qwen3-VL-8B-Instruct (Qwen) and LLaVA-1.5-7B (LLaVA), with 36 and 32 Transformer layers, respectively. The models are partitioned after layers 23 and 20, roughly two-thirds of each backbone. All target-model parameters remain frozen throughout training and refinement.

For each target model, both reconstruction branches use separately parameterized 1-layer causal Transformer blocks trained from scratch. The blocks follow the designs of Qwen3-1.7B and Llama-3.2-1B for Qwen and LLaVA, respectively, with a hidden size of 4,096. The visual branch further employs a 2-layer MLP projector, a model-specific inverse vision module, and a u-net style convolutional reconstruction backbone. The corresponding visual feature dimensions are 1,152 and 1,024, with patch sizes of 16$\times$16 and 14$\times$14, respectively. The textual branch projects the mapped hidden states onto the corresponding target-model vocabulary for position-wise prediction. The visual and textual reconstructors are trained for 3 and 2 epochs with batch sizes of 3 and 1, respectively.

All reconstructors use AdamW with a learning rate of 3$\times$10$^{-5}$, a linear schedule, and 100 warm-up steps. For consistency-guided refinement, the discrepancy $D(\cdot,\cdot)$ is defined as mean squared error. Image refinement runs for 100 iterations ($K$=100) using AdamW, with learning rates of 4.5$\times$10$^{-3}$ and 1.4$\times$10$^{-2}$ for Qwen and LLaVA, respectively. After each update, the image is projected to satisfy a normalized $L_2$ (RMS) radius of 0.01 from the initial reconstruction and a maximum per-pixel deviation of 0.05, with pixel values clipped to $[0,1]$. Textual refinement uses a confidence threshold of $\tau$=0.85 and a candidate-set size of 8. All experiments are conducted on NVIDIA RTX 5880 Ada Generation GPUs. End-to-end latency is measured with a batch size of 1.

\subsubsection{\textbf{Methods for Comparison}}
We compare our method with five representative inversion baselines covering optimization-based and learned reconstruction. For images, CLIPInversion~\cite{kazemi2024we} performs pixel optimization through representation matching, SMI~\cite{hu2024sparse} uses attention-guided sparse inversion, and INVERSE-TVM~\cite{rathjens2024inverting} learns module-wise inverse mappings. For text, PIA~\cite{qu2025prompt} combines embedding optimization with adaptive discretization, while SIPIT~\cite{nikolaou2025language} sequentially recovers tokens through continuous optimization and vocabulary projection. We adapt each baseline to the separated visual or textual hidden states at the target partition layer while preserving its original inversion mechanism. All methods are evaluated using identical data splits, intercepted hidden states, and target-model access, with the same input resolution used for visual reconstruction.

\subsubsection{\textbf{Evaluation Metrics}}
For visual reconstruction, we report MSE, PSNR, and SSIM. MSE measures pixel-wise distortion, PSNR measures reconstruction fidelity, and SSIM evaluates structural similarity. Lower MSE and higher PSNR and SSIM indicate better performance. For textual reconstruction, we report Token Accuracy (Token Acc.), Exact Match Rate (EMR), and BERTScore F1 (BERT-F1), which measure position-wise token recovery, exact sequence recovery, and semantic similarity, respectively. Higher values indicate better performance. Notably, only tokens corresponding to the original textual content contribute to the textual evaluation.
\subsection{Empirical Validation of Hidden-State Recoverability (RQ1)}
\label{sec:rq1-recoverability}

Following Section~\ref{sec:recoverability}, we examine whether privacy-relevant visual semantics remain (i) linearly decodable from hidden state, (ii) non-degenerate along semantic directions, and (iii) transverse to measured nuisance variations. We evaluate Qwen at layers 0, 6, 11, 17, 23, 29, and 35, and LLaVA at layers 0, 5, 10, 15, 20, 25, and 31.

\begin{table}[t]
\centering
\caption{Macro-average balanced accuracy of ten privacy-relevant CelebA attribute probes across model depths.}
\label{tab:semantic-probes}
\scriptsize
\setlength{\tabcolsep}{2.8pt}
\renewcommand{\arraystretch}{0.95}
\begin{tabular}{@{}llccccccc@{}}
\toprule
Model & & \multicolumn{7}{c}{Evaluated layers} \\
\cmidrule(lr){3-9}
\multirow{2}{*}{Qwen}
& Layer
& 0 & 6 & 11 & 17 & 23 & 29 & 35 \\
& Bal. Acc.
& 0.8213 & 0.8077 & 0.8039 & 0.8015
& 0.7971 & 0.8004 & 0.7708 \\
\midrule
\multirow{2}{*}{LLaVA}
& Layer
& 0 & 5 & 10 & 15 & 20 & 25 & 31 \\
& Bal. Acc.
& 0.8267 & 0.8067 & 0.8001 & 0.7853
& 0.7904 & 0.7798 & 0.7866 \\
\bottomrule
\end{tabular}
\end{table}

\begin{figure}[t]
\centering
\includegraphics[width=0.235\columnwidth]
{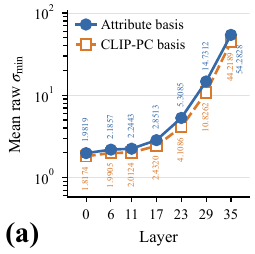}
\hfill
\includegraphics[width=0.235\columnwidth]
{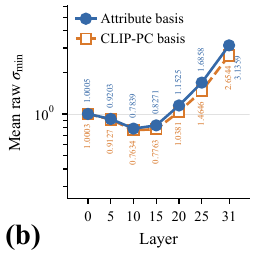}
\hfill
\includegraphics[width=0.235\columnwidth]
{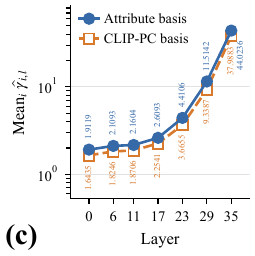}
\hfill
\includegraphics[width=0.235\columnwidth]
{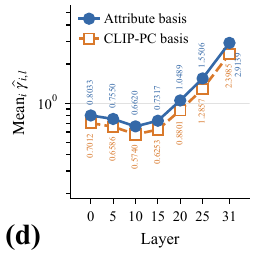}
\caption{Exact-JVP results over 100 evaluation inputs per layer and semantic basis. Panels (a)--(b) report the mean $\sigma_{\min}(\widehat J_{c,i,l})$ for Qwen and LLaVA, while panels (c)--(d) report the corresponding mean quotient margin $\widehat\gamma_{i,l}=\sigma_{\min}(\widehat S_{i,l})$. Error bars denote 95\% input-bootstrap confidence intervals.}
\label{fig:recoverability-validation}
\end{figure}

\subsubsection{\textbf{Semantic Decodability}}
\label{sec:e1-semantic-probes}

We mean-pool visual-token hidden states and fit independent ridge classifiers for ten CelebA attributes: Male, Smiling, Eyeglasses, Young, Black Hair, Blond Hair, Brown Hair, Mustache, No Beard, and Wearing Hat. Using disjoint sets of 5,000 training and 2,000 test images, the macro-average balanced accuracy ranges from 0.7708 to 0.8213 for Qwen and from 0.7798 to 0.8267 for LLaVA (Table~\ref{tab:semantic-probes}). Every attribute-layer probe at every evaluated layer achieves above 0.5 balanced accuracy, with minimum values of 0.5935 for Qwen and 0.5944 for LLaVA. The results indicate that privacy-relevant attributes remain linearly decodable from hidden states across different model depths.

\subsubsection{\textbf{Semantic Non-Degeneracy}}
\label{sec:e2-e4-exact-jvp}
Let $z_i$ be the vectorized projected visual-token representation of input $i$, and let $F_l(z,\mathbf{T}_0)$ denote the vectorized layer-$l$ hidden state under $\mathbf{T}_0=\text{``Describe the image in detail.''}$. For a model-specific semantic basis $V_c$, we compute $\widehat J_{c,i,l}=D_zF_l(z_i,\mathbf{T}_0)V_c$ using exact JVPs via automatic differentiation. This quantity serves as a token-space proxy for $J_c$ in Eq.~\eqref{eq:quotient-semantic-response}, with $V_c$ representing the selected semantic directions in the projected space. 
%rather than through the local image parameterization $G(c,u)$.

We use 192 images to construct two orthonormal semantic bases and a disjoint set of 100 images for evaluation. The first is a 9-dimensional CelebA basis formed from the mean differences between positive and negative examples of each attribute. We exclude Mustache because the basis set contains only one positive example. The second is a 16-dimensional CLIP-PC basis, where we map the leading principal components of normalized CLIP image embeddings into the centered projected-token space. Rank and $\sigma_{\min}$ are computed separately for each input before aggregation, with numerical rank determined using the matrix-specific tolerance $\max\{10^{-10},10^{-6}\sigma_{\max}\}$. All 2,800 semantic-response matrices $\widehat J_{c,i,l}$ are full column rank. As shown in Fig.~\ref{fig:recoverability-validation}(a)--(b), the mean $\sigma_{\min}$ ranges are 1.9819--54.2828/1.8174--44.2189 for Qwen and 0.7839--3.1359/0.7634--2.6544 for LLaVA under the CelebA/CLIP-PC bases, respectively. These results indicate that the hidden state responds to every tested semantic variation along both semantic subspaces.

\subsubsection{\textbf{Semantic--Nuisance Transversality}}
\label{sec:e3-transversality}

From the same 192 basis images, we construct 960 unit-normalized projected-token secants using paired brightness (1.15/0.85), contrast (1.15/0.85), saturation (1.20/0.80), hue ($+0.04/-0.04$), and sharpness (1.30/0.70) transformations. Their leading 16 right singular vectors define a model-specific nuisance basis $V_u$ shared across inputs and layers. In particular, $V_c$ and $V_u$ are orthonormalized separately. For each input $i$ and layer $l$, we use automatic differentiation to compute $\widehat J_{u,i,l}=D_zF_l(z_i,\mathbf{T}_0)V_u$, and then obtain $\widehat{\Pi}_{u,i,l}^{\perp} =I_{d_h}-\widehat J_{u,i,l}\widehat J_{u,i,l}^{\dagger}$, $\widehat S_{i,l} =\widehat{\Pi}_{u,i,l}^{\perp}\widehat J_{c,i,l}$, and $\widehat\gamma_{i,l} =\sigma_{\min}(\widehat S_{i,l})$. For each layer and semantic basis, 
we average $\widehat\gamma_{i,l}$ across evaluation inputs separately.

Across all 2,800 nuisance-projected semantic-response matrices $\widehat S_{i,l}$, every matrix remains full column rank, with $\widehat\gamma_{i,l}>0$. All 700 LLaVA nuisance-response matrices have rank 16. For Qwen, 693 matrices have rank 16 and the remaining seven have ranks between 11 and 15. Despite this lower nuisance rank, the corresponding projected semantic-response matrices remain full column rank. Figure~\ref{fig:recoverability-validation}(c)--(d) shows that the layer-wise mean quotient margin is positive for both semantic bases at every evaluated layer. For Qwen, it ranges from 1.9119 to 44.0236 under the CelebA basis and from 1.6435 to 37.9883 under the CLIP-PC basis. For LLaVA, the corresponding ranges are 0.6620--2.9139 and 0.5740--2.3985. These positive margins indicate that the tested semantic variations remain distinguishable after removing the hidden-state responses associated with the measured nuisance variations. Since hidden-state scales differ across architectures, absolute margins are compared only within each model. To account for layer-dependent response scales, we additionally compute $\widehat\rho_{i,l}=\frac{\sigma_{\min}(\widehat S_{i,l})}{\sigma_{\min}(\widehat J_{c,i,l})}$ separately for each input before averaging. Across layers and semantic bases, the resulting mean $\widehat\rho_{i,l}$ ranges from 0.7952 to 0.9656 for Qwen and from 0.7010 to 0.9299 for LLaVA, showing that the weakest semantic response remains substantially preserved after nuisance projection.

These pointwise results provide empirical evidence for local semantic recoverability within the evaluated subspaces, as the selected semantics remain linearly decodable and non-degenerate with positive quotient margins after removing measured nuisance responses. They do not establish neighborhood-wide positivity, global or exact pixel-level invertibility, or recoverability under unmeasured nuisance factors.

\begin{table}[!t]
\centering
\caption{Quantitative comparison of image quality.}
\label{tab:main_results}
\setlength{\tabcolsep}{2.5pt}
\renewcommand{\arraystretch}{1.05}
\resizebox{\columnwidth}{!}{
\begin{tabular}{llcccccc}
\toprule
\multirow{2}{*}{\textbf{Dataset}} &
\multirow{2}{*}{\textbf{Method}} &
\multicolumn{3}{c}{\textbf{Qwen-VL}} &
\multicolumn{3}{c}{\textbf{LLaVA}} \\
\cmidrule(lr){3-5}
\cmidrule(lr){6-8}
& &
MSE $\downarrow$ &
PSNR $\uparrow$ &
SSIM $\uparrow$ &
MSE $\downarrow$ &
PSNR $\uparrow$ &
SSIM $\uparrow$ \\
\midrule

% ===================== CelebA =====================
\multirow{4}{*}{CelebA}
& SMI
& 0.2504 & 6.28 & 0.0195
& 0.2531 & 6.32 & 0.0200 \\
& CLIPInversion
& 0.1780 & 7.57 & 0.0139
& 0.1989 & 7.07 & 0.0096 \\
& INVERSE-TVM
& 0.0303 & 15.58 & 0.4685
& 0.0335 & 15.17 & 0.4761 \\
\rowcolor{gray!15}\cellcolor{white}
& Ours
& \textbf{0.0133} & \textbf{19.23} & \textbf{0.6256}
& \textbf{0.0135} & \textbf{19.24} & \textbf{0.6385} \\
\midrule

% ===================== Flowers-102 =====================
\multirow{4}{*}{Flowers-102}
& SMI
& 0.2129 & 6.88 & 0.0231
& 0.2041 & 7.09 & 0.0220 \\
& CLIPInversion
& 0.1834 & 7.43 & 0.0122
& 0.2014 & 7.02 & 0.0092 \\
& INVERSE-TVM
& 0.0465 & 13.68 & 0.3390
& 0.0489 & 13.43 & 0.3429 \\
\rowcolor{gray!15}\cellcolor{white}
& Ours
& \textbf{0.0222} & \textbf{17.08} & \textbf{0.4630}
& \textbf{0.0204} & \textbf{17.51} & \textbf{0.4879} \\
\midrule

% ===================== VQAv2 =====================
\multirow{4}{*}{VQAv2}
& SMI
& 0.2379 & 6.50 & 0.0138
& 0.2370 & 6.54 & 0.0140 \\
& CLIPInversion
& 0.1687 & 7.81 & 0.0130
& 0.1861 & 7.37 & 0.0099 \\
& INVERSE-TVM
& 0.0352 & 15.07 & 0.3990
& 0.0396 & 14.51 & 0.3872 \\
\rowcolor{gray!15}\cellcolor{white}
& Ours
& \textbf{0.0213} & \textbf{17.39} & \textbf{0.4797}
& \textbf{0.0249} & \textbf{16.61} & \textbf{0.4504} \\
\bottomrule
\end{tabular}
}
\end{table}

\begin{table}[t]
\centering
\caption{Text reconstruction performance (\%).}
\label{tab:text_reconstruction_results}
\scriptsize
\setlength{\tabcolsep}{1.8pt}
\renewcommand{\arraystretch}{1.0}
\resizebox{\columnwidth}{!}{
\begin{tabular}{@{}llcccccc@{}}
\toprule
\multirow{2}{*}{Dataset} &
\multirow{2}{*}{Method} &
\multicolumn{3}{c}{Qwen} &
\multicolumn{3}{c}{LLaVA} \\
\cmidrule(lr){3-5}
\cmidrule(lr){6-8}
& &
Token Acc. $\uparrow$ & EMR $\uparrow$ & BERT-F1 $\uparrow$ &
Token Acc. $\uparrow$ & EMR $\uparrow$ & BERT-F1 $\uparrow$ \\
\midrule

\multirow{3}{*}{FEVER}
& PIA
& 27.45 & 0.00 & 53.31
& 16.41 & 0.00 & 48.14 \\
& SIPIT
& 96.57 & 78.33 & 96.92
& 98.88 & 90.00 & 99.34 \\
\rowcolor{gray!15}\cellcolor{white}
& Ours
& \textbf{97.93} & \textbf{84.17} & \textbf{98.65}
& \textbf{99.56} & \textbf{94.17} & \textbf{99.63} \\
\midrule

\multirow{3}{*}{MS MARCO}
& PIA
& 28.13 & 4.17 & 58.72
& 17.98 & 0.83 & 52.79 \\
& SIPIT
& 95.98 & 84.17 & 96.97
& 99.20 & 93.33 & 99.21 \\
\rowcolor{gray!15}\cellcolor{white}
& Ours
& \textbf{99.76} & \textbf{98.33} & \textbf{99.81}
& \textbf{99.90} & \textbf{99.17} & \textbf{99.97} \\
\midrule

\multirow{3}{*}{VQAv2}
& PIA
& 28.26 & 0.00 & 53.21
& 19.11 & 0.00 & 49.02 \\
& SIPIT
& 83.24 & 38.33 & 83.54
& 98.48 & 93.33 & 99.55 \\
\rowcolor{gray!15}\cellcolor{white}
& Ours
& \textbf{99.89} & \textbf{99.17} & \textbf{99.95}
& \textbf{99.80} & \textbf{99.17} & \textbf{99.92} \\
\bottomrule
\end{tabular}
}
\end{table}

\begin{figure}[!t]
    \centering
    \includegraphics[width=0.98\columnwidth]{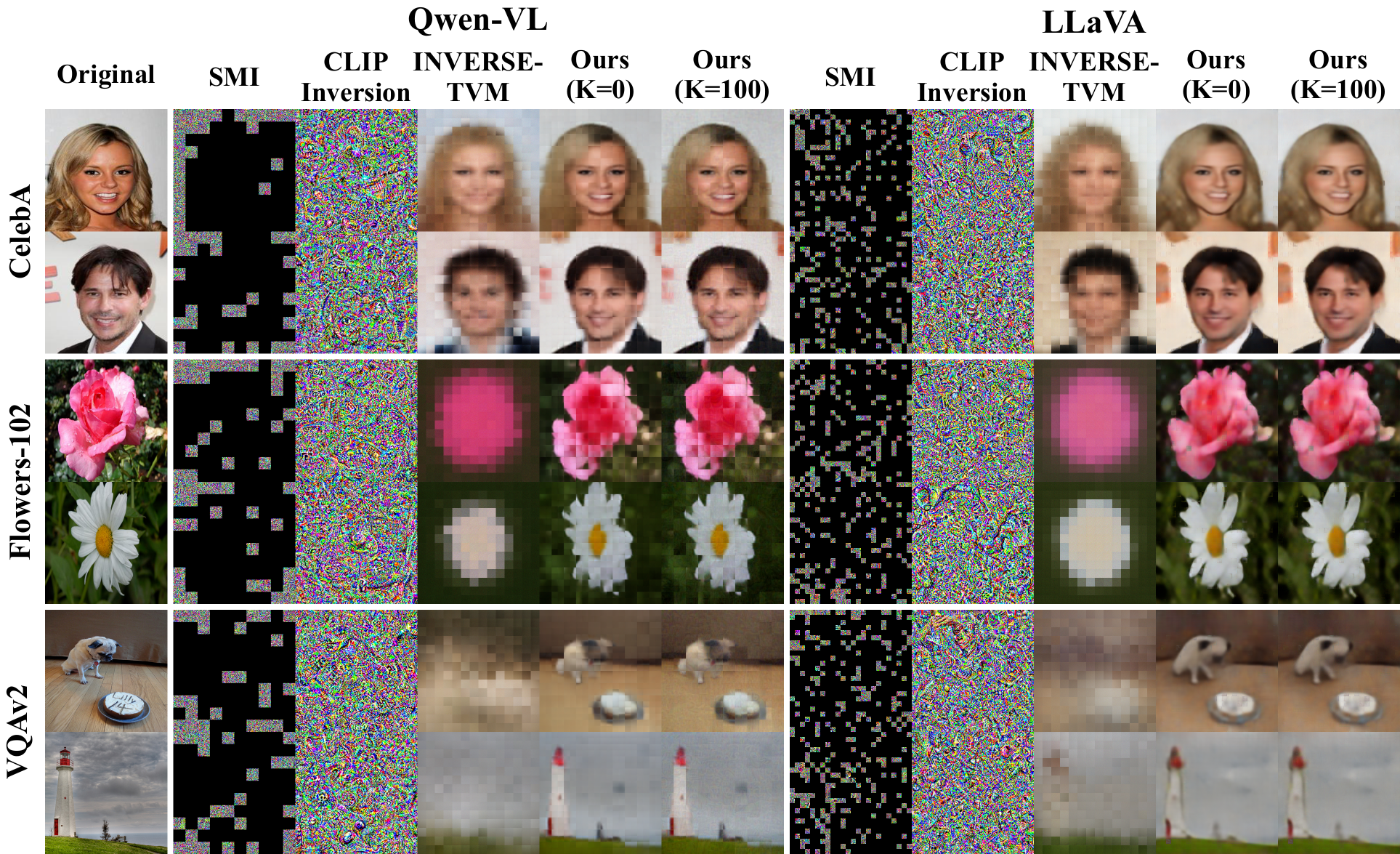}
    \caption{Visual comparison across different methods.}
    \label{fig:visual_reconstruction}
\end{figure}

\subsection{Effectiveness of Multimodal Reconstruction (RQ2)}
\label{sec:rq2-reconstruction}

\subsubsection{\textbf{Visual Reconstruction}}

To evaluate RASR across visual domains and LVLM architectures, we compare it with three baselines on CelebA, Flowers-102, and VQAv2, which cover faces, fine-grained flowers, and diverse real-world scenes, using Qwen and LLaVA. \textbf{RASR consistently achieves the strongest quantitative performance.} As shown in Table~\ref{tab:main_results}, it obtains the lowest MSE and highest PSNR and SSIM in all six dataset--model combinations. Compared with INVERSE-TVM, the strongest baseline, RASR reduces MSE by 50.5\% on average, improves PSNR by 3.27\,dB, and yields a 30.5\% average relative SSIM gain. Because aggregate metrics do not fully reflect input-specific content, we further examine representative reconstructions from all three datasets under both models. \textbf{RASR also produces more faithful reconstructions.} As shown in Fig.~\ref{fig:visual_reconstruction}, CLIPInversion recovers limited meaningful structure, while INVERSE-TVM preserves coarse content but exhibits structural and color distortions. In contrast, RASR better retains spatial layout, color distribution, and input-specific appearance, particularly after consistency-guided refinement. These results show that exposed hidden states preserve recoverable visual information across the evaluated domains and architectures.

\subsubsection{\textbf{Textual Reconstruction}}

To evaluate textual leakage across task types and LVLM architectures, we reconstruct inputs from FEVER, MS MARCO, and VQAv2 using the intercepted hidden states of Qwen and LLaVA. These datasets cover fact verification, passage retrieval, and visual question answering. \textbf{RASR achieves the strongest reconstruction accuracy.} As shown in Table~\ref{tab:text_reconstruction_results}, it performs best on all three metrics across all six dataset--model combinations. Across these settings, RASR averages 99.47\% Token Acc., 95.69\% EMR, and 99.66\% BERT-F1, exceeding SIPIT by 4.08, 16.12, and 3.73 percentage points, respectively. \textbf{RASR is also substantially faster.} Across the three datasets, it requires 0.193/0.267\,s/token for Qwen/LLaVA, compared with 33.731/42.192\,s/token for PIA and 60.553/160.997\,s/token for SIPIT. These results show that split-layer hidden states enable accurate and efficient recovery of both exact textual content and semantic information.

\begin{table}[t]
\centering
\caption{Reconstruction performance across partition layers on CelebA (visual) and FEVER (textual).}
\label{tab:all-layer-recovery}
\scriptsize
\setlength{\tabcolsep}{2.5pt}
\renewcommand{\arraystretch}{0.95}

\resizebox{\columnwidth}{!}{
\begin{tabular}{@{}llccccccc@{}}
\toprule
Model & Metric &
\multicolumn{7}{c}{Reconstruction at different layers} \\
\cmidrule(lr){3-9}

\multirow{5}{*}{Qwen}
& Layer
& 0 & 6 & 11 & 17 & 23 & 29 & 35 \\
\cmidrule(lr){2-9}

& PSNR $\uparrow$
& 21.69 & 21.89 & 20.46 & 19.28
& 19.23 & 18.44 & 17.66 \\

& Sensitive Acc. $\uparrow$
& 88.36\% & 89.04\% & 87.97\%
& 86.83\% & 87.22\% & 86.23\% & 84.41\% \\
\cmidrule(lr){2-9}

& Token Acc. $\uparrow$
& 100.00\% & 99.57\% & 99.21\% & 97.64\%
& 97.93\% & 96.07\% & 91.35\% \\

& BERT-F1 $\uparrow$
& 100.00\% & 99.74\% & 99.63\% & 97.93\%
& 98.65\% & 96.57\% & 93.24\% \\
\midrule

\multirow{5}{*}{LLaVA}
& Layer
& 0 & 5 & 10 & 15 & 20 & 25 & 31 \\
\cmidrule(lr){2-9}

& PSNR $\uparrow$
& \textbf{21.21} & 21.06 & 20.34 & 19.73
& 19.24 & 19.02 & 18.58 \\

& Sensitive Acc. $\uparrow$
& 89.54\% & 89.49\% & 89.12\%
& 89.37\% & 89.26\% & 89.31\% & 89.11\% \\
\cmidrule(lr){2-9}

& Token Acc. $\uparrow$
& 100.00\% & 99.88\% & 100.00\%
& 99.88\% & 99.56\% & 98.69\% & 96.82\% \\

& BERT-F1 $\uparrow$
& 100.00\% & 99.95\% & 100.00\%
& 99.92\% & 99.63\% & 98.44\% & 97.06\% \\
\bottomrule
\end{tabular}
}
\end{table}

\subsubsection{\textbf{Partition Depth}}

Since collaborative LVLM inference may use different partition depths, we evaluate RASR at seven layers per model on CelebA and FEVER for visual and textual recovery, respectively. On CelebA, we use the pretrained Anycost GANs attribute classifier~\cite{Lin_2021_CVPR} to measure ten-attribute macro accuracy (Sensitive Acc.), with 94.22\% on the original images as the reference. \textbf{Visual leakage persists across partition depths.} As shown in Table~\ref{tab:all-layer-recovery}, although image reconstruction quality generally declines with depth, Sensitive Acc. remains above 84\% at the deepest evaluated layer for both models, indicating that privacy-relevant visual attributes remain recoverable. \textbf{Textual leakage also persists across partition depths.} Token Acc. exhibits an overall decline with depth but remains above 91\% at the deepest evaluated layer for both models, showing that substantial input-text information persists even in deep-layer hidden states.

\begin{table}[t]
\centering
\caption{Ablation of the visual reconstructor architecture on CelebA using Qwen at layer 23.}
\label{tab:reconstructor_ablation}
\scriptsize
\setlength{\tabcolsep}{2.5pt}
\renewcommand{\arraystretch}{1.05}
\resizebox{\columnwidth}{!}{
\begin{tabular}{@{}lcccc@{}}
\toprule
Variant &
Effective Params. ($\Delta$) &
MSE $\downarrow$ &
PSNR $\uparrow$ &
SSIM $\uparrow$ \\
\midrule
w/o Hierarchical Mapping
& 696.85M $(+0.20\%)$
& 0.0154
& 18.56
& 0.6039 \\

w/o Inverse Patch Merger
& 690.15M $(-0.76\%)$
& 0.0141
& 18.97
& 0.6178 \\

\textbf{Full}
& 695.47M $(0.00\%)$
& \textbf{0.0133}
& \textbf{19.23}
& \textbf{0.6256} \\
\bottomrule
\end{tabular}
}
\end{table}

\begin{figure}[!t]
    \centering
    \includegraphics[width=0.98\columnwidth]{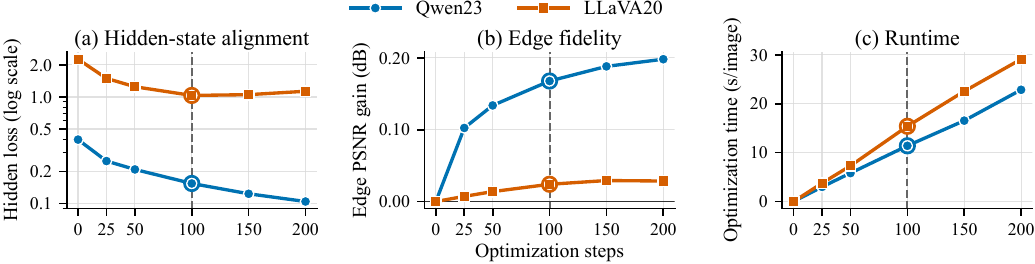}
    \caption{Effect of the visual refinement budget $K$.}
    \label{fig:visual_refinement_steps}
\end{figure}

\begin{table}[t]
\centering
\caption{Effect of reconstructed image guidance on textual refinement on VQAv2 (\%).}
\label{tab:reconstructed_image_ablation}
\scriptsize
\setlength{\tabcolsep}{2.2pt}
\renewcommand{\arraystretch}{1.0}
\resizebox{\columnwidth}{!}{
\begin{tabular}{@{}lcccccc@{}}
\toprule
\multirow{2}{*}{Setting} &
\multicolumn{3}{c}{Qwen} &
\multicolumn{3}{c}{LLaVA} \\
\cmidrule(lr){2-4}
\cmidrule(lr){5-7}
& Token Acc. & EMR & BERT-F1
& Token Acc. & EMR & BERT-F1 \\
\midrule

w/o reconstructed image
& 99.78 & 98.33 & 99.89
& 99.70 & 98.33 & 99.89 \\

with reconstructed image
& \textbf{99.89} & \textbf{99.17} & \textbf{99.95}
& \textbf{99.80} & \textbf{99.17} & \textbf{99.92} \\

\bottomrule
\end{tabular}
}
\end{table}

\begin{table}[!t]
\centering
\caption{Ablation study of the confidence threshold $\tau$.}
\label{tab:text_refinement_ablation}
\setlength{\tabcolsep}{3.5pt}
\renewcommand{\arraystretch}{1.06}
\resizebox{\columnwidth}{!}{
\begin{tabular}{lcccc}
\toprule
\textbf{Setting} &
\textbf{Token Acc.} $\uparrow$ &
\textbf{EMR} $\uparrow$ &
\textbf{BERT-F1} $\uparrow$ &
\textbf{Time (ms/token)} $\downarrow$ \\
\midrule
w/o refinement
& 96.16 & 73.33 & 96.92 & 11.25 \\

$\tau=0.75$
& 99.44 & 95.28 & 99.63 & 201.38 \\

$\tau=0.80$
& 99.44 & 95.28 & 99.63 & 214.70 \\

$\tau=0.85$
& \textbf{99.47} & 95.69 & \textbf{99.66} & 229.74 \\

$\tau=0.90$
& 99.45 & 95.69 & \textbf{99.66} & 268.91 \\

$\tau=0.95$
& 99.45 & \textbf{95.83} & 99.64 & 315.32 \\
\bottomrule
\end{tabular}
}
\end{table}

\subsection{Ablation and Parameter Analysis (RQ2)}

\subsubsection{\textbf{Visual reconstructor ablation}} We assess the contributions of hierarchical mapping and the inverse patch merger by separately removing each component on CelebA using Qwen at layer 23. As shown in Table~\ref{tab:reconstructor_ablation}, the full reconstructor performs best across all three metrics. Its MSE rises from 0.0133 to 0.0154 without hierarchical mapping and to 0.0141 without the inverse patch merger. Notably, the former variant uses 0.20\% more effective parameters, showing that the benefit of hierarchical mapping is not attributable to parameter count.

\subsubsection{\textbf{Visual refinement steps}} We vary $K$ on CelebA at the default partition layers to assess effectiveness and runtime. As shown in Fig.~\ref{fig:visual_refinement_steps}, relative to $K=0$, $K=100$ reduces hidden-state MSE by 61.25\% for Qwen and 54.30\% for LLaVA, while improving Edge PSNR by only 0.168 and 0.024\,dB, respectively. This limited gain may reflect the approximately many-to-one mapping of fine-grained image details into split-layer hidden states. The constraint narrows the solution space around the initial reconstruction but cannot uniquely recover all edge details. Beyond $K=100$, Edge PSNR changes little, LLaVA's hidden-state MSE rises, and runtime increases. Therefore, we set $K=100$ by default.

\subsubsection{\textbf{Visual guidance for textual refinement}} We examine whether refined-image guidance improves text recovery on VQAv2, the only text benchmark with paired images. As shown in Table~\ref{tab:reconstructed_image_ablation}, it improves Token Acc. by 0.11/0.10 percentage points for Qwen/LLaVA and EMR by 0.84 percentage points for both, with slight BERT-F1 gains. Although modest given the high initial accuracy, these consistent gains indicate that refined visual context provides complementary information for text recovery.

\subsubsection{\textbf{Textual refinement threshold}} We sweep $\tau$ across all six dataset--model combinations to balance refinement coverage and computational cost. As shown in Table~\ref{tab:text_refinement_ablation}, $\tau=0.85$ improves all three recovery metrics over no refinement, reaching 99.47\% Token Acc. and 95.69\% EMR at 229.74\,ms/token. Higher thresholds increase runtime to 268.91--315.32\,ms/token without yielding consistent accuracy gains. Therefore, we use $\tau=0.85$ as the default threshold.
\section{Conclusion and Future Work}

This work addresses the underexplored risk of multimodal privacy leakage in collaborative LVLM inference. We first show that, under local regularity conditions and a positive semantic--nuisance separation margin, privacy-relevant visual semantics remain locally identifiable and stably recoverable from intermediate hidden states. Building on this analysis, we propose RASR, a coarse-to-fine attack that combines symmetric reconstruction with hidden-state-guided refinement to recover visual and textual inputs. Experiments on two LVLMs and five datasets show that RASR consistently outperforms existing attacks in both visual and textual reconstruction. By demonstrating the recoverability of both modalities, these results underscore the need for lightweight representation-level defenses that suppress inversion-sensitive information while preserving inference utility, as well as selective privacy-preserving computation that protects only high-risk hidden-state components or layers to reduce computational and communication overhead.

\bibliographystyle{IEEEtran} 
\bibliography{main}

\end{document}